
\documentclass[letterpaper, 10 pt, conference]{ieeeconf}  

\IEEEoverridecommandlockouts                              
\overrideIEEEmargins

\usepackage{graphicx,amssymb,amstext}
\usepackage{amsmath}
\usepackage{hyperref}
\usepackage{epsfig}
\usepackage{verbatim}
\usepackage{cite}

\newtheorem{theorem}{Theorem}
\newtheorem{definition}{Definition}
\newtheorem{remark}{Remark}

\newcommand{\qed}{\nobreak \ifvmode \relax \else
      \ifdim\lastskip<1.5em \hskip-\lastskip
      \hskip1.5em plus0em minus0.5em \fi \nobreak
      \vrule height0.3em width0.5em depth0.25em\fi}

\title{\LARGE \bf
Coherent-Classical Estimation versus Purely-Classical Estimation for Linear Quantum Systems
}

\author{Shibdas Roy*, Ian R. Petersen and Elanor H. Huntington%
\thanks{This work was supported by the Australian Research Council (ARC).}%
\thanks{The authors are with the School of Engineering and Information Technology, University of New South Wales at the Australian Defence Force Academy, Canberra, ACT 2600, Australia.}%
\thanks{*\tt\small shibdas.roy at student.adfa.edu.au}%
}

\begin{document}

\maketitle
\thispagestyle{empty}
\pagestyle{empty}

\begin{abstract}

We consider a coherent-classical estimation scheme for a class of linear quantum systems. It comprises an estimator that is a mixed quantum-classical system without involving coherent feedback. The estimator yields a classical estimate of a variable for the quantum plant. We demonstrate that for a passive plant that can be characterized by annihilation operators only, such coherent-classical estimation provides no improvement over purely-classical estimation. An example is also given which shows that if the plant is not assumed to be an annihilation operator only quantum system, it is possible to get better estimates with such coherent-classical estimation compared with purely-classical estimation.

\end{abstract}

\section{Introduction}

\bstctlcite{BSTcontrol}

Estimation and control problems for quantum systems have been of significant interest in recent years \cite{WM1,YK1,YK2,NY,JNP,NJP,GGY,MP1,MP2,YNJP,GJ,GJN,IRP1}. An important class of quantum systems are linear quantum systems \cite{WM1,NY,JNP,NJP,GGY,GJN,WM2,GZ,WD,NJD,HM,SSM,IRP3}, that describe quantum optical devices such as optical cavities \cite{WM2,BR}, linear quantum amplifiers \cite{GZ}, and finite bandwidth squeezers \cite{GZ}. Much recent work has considered coherent feedback control for linear quantum systems, where the feedback controller itself is also a quantum system \cite{JNP,NJP,MP1,MP2,HM,WM3,SL,GW,IRP3}. A related coherent-classical estimation problem has been considered by one of the authors in Ref. \cite{IRP2}, where the estimator consists of a classical part, which produces the desired final estimate and a quantum part, which may also involve coherent feedback. Note that this is different from the problem considered in Ref. \cite{MJ} which involves constructing a quantum observer. A quantum observer is a purely quantum system, that produces a quantum estimate of a variable for the quantum plant. On the other hand, a coherent-classical estimator is a mixed quantum-classical system, that produces a classical estimate of a variable for the quantum plant.

In this paper, we consider a special case of the coherent-classical estimation problem, i.e. one that does not involve coherent feedback. We first formulate the optimal coherent-classical estimation problem without any assumptions. We then illustrate that if the quantum plant is assumed to be a physically realizable annihilation operator only quantum system, there is no improvement in the accuracy of the estimate with a coherent-classical estimator over that with a classical-only estimator. We present an example where this is the case to illustrate this result. However, if the quantum plant is physically realizable but not assumed to be characterized by annihilation operators only, it is possible to get more precise estimate with a coherent-classical estimator when compared with a purely-classical estimator as demonstrated by an example which is presented.

\section{Linear Quantum Systems and Physical Realizability}
The class of linear quantum systems we consider here are described by the quantum stochastic differential equations (QSDEs) \cite{JNP,GJN,IRP1,IRP2,SP}:
\begin{equation}\label{eq:lqs_1}
\begin{split}
\left[\begin{array}{c}
da(t)\\
da(t)^{\#}
\end{array}\right] &= F \left[\begin{array}{c}
a(t)\\
a(t)^{\#}
\end{array}\right] dt + G \left[\begin{array}{c}
d\mathcal{A}(t)\\
d\mathcal{A}(t)^{\#}
\end{array}\right];\\
\left[\begin{array}{c}
d\mathcal{A}^{out}(t)\\
d\mathcal{A}^{out}(t)^{\#}
\end{array}\right] &= H \left[\begin{array}{c}
a(t)\\
a(t)^{\#}
\end{array}\right] dt + K \left[\begin{array}{c}
d\mathcal{A}(t)\\
d\mathcal{A}(t)^{\#}
\end{array}\right],
\end{split}
\end{equation}
where
\begin{equation}\label{eq:lqs_2}
\begin{split}
F &= \Delta(F_1,F_2), \qquad G = \Delta(G_1,G_2),\\
H &= \Delta(H_1,H_2), \qquad K = \Delta(K_1,K_2).
\end{split}
\end{equation}

Here, $a(t) = [a_1(t) \hdots a_n(t)]^T$ is a vector of annihilation operators. The adjoint of the operator $a_i$ is called a creation operator, denoted by $a_i^{*}$. Also, the notation $\Delta(F_1,F_2)$ denotes the matrix $\left[\begin{array}{cc} F_1 & F_2\\
F_2^{\#} & F_1^{\#}
\end{array}\right]$. Here, $F_1$, $F_2 \in \mathbb{C}^{n \times n}$, $G_1$, $G_2 \in \mathbb{C}^{n \times m}$, $H_1$, $H_2 \in \mathbb{C}^{m \times n}$, and $K_1$, $K_2 \in \mathbb{C}^{m \times m}$. Moreover, $^{\#}$ denotes the adjoint of a vector of operators or the complex conjugate of a complex matrix. Furthermore, $^\dagger$ denotes the adjoint transpose of a vector of operators or the complex conjugate transpose of a complex matrix. In addition, the vector $\mathcal{A} = [\mathcal{A}_1 \hdots \mathcal{A}_m]^T$ represents a collection of external independent quantum field operators and the vector $\mathcal{A}^{out}$ represents the corresponding vector of output field operators.

\begin{definition}
(See \cite{JNP,IRP1,IRP2,SP}) A complex linear quantum system of the form (\ref{eq:lqs_1}), (\ref{eq:lqs_2}) is said to be physically realizable if there exists a complex commutation matrix $\Theta = \Theta^\dagger$, a complex Hamiltonian matrix $M = M^\dagger$, and a coupling matrix $N$ such that
\begin{equation}\label{eq:theta}
\Theta = TJT^\dagger,
\end{equation}
where $J = \left[\begin{array}{cc}
I & 0\\
0 & -I
\end{array}\right]$, $T = \Delta(T_1,T_2)$ is non-singular, $M$ and $N$ are of the form
\begin{equation}
M = \Delta(M_1,M_2), \qquad N = \Delta(N_1,N_2)
\end{equation}
and
\begin{equation}
\begin{split}
F &= -\iota\Theta M - \frac{1}{2}\Theta N^\dagger JN,\\
G &= -\Theta N^\dagger J,\\
H &= N,\\
K &= I.
\end{split}
\end{equation}
\end{definition}

Here, the commutation matrix $\Theta$ satisfies the following commutation relation:
\begin{equation}\label{eq:comm_rel1}
\begin{split}
&\left[\left[\begin{array}{c}
a\\
a^{\#}
\end{array}\right], \left[\begin{array}{c}
a\\
a^{\#}
\end{array}\right]^\dagger\right]\\
&= \left[\begin{array}{c}
a\\
a^{\#}
\end{array}\right]\left[\begin{array}{c}
a\\
a^{\#}
\end{array}\right]^\dagger - \left(\left[\begin{array}{c}
a\\
a^{\#}
\end{array}\right]^\# \left[\begin{array}{c}
a\\
a^{\#}
\end{array}\right]^T\right)^T\\
&= \Theta .
\end{split}
\end{equation}

\begin{theorem}\label{thm:phys_rlz}
(See \cite{SP,IRP3}) The linear quantum system (\ref{eq:lqs_1}), (\ref{eq:lqs_2}) is physically realizable if and only if there exists a complex matrix $\Theta = \Theta^\dagger$ such that $\Theta$ is of the form in (\ref{eq:theta}), and
\begin{equation}
\begin{split}
F\Theta + \Theta F^\dagger + GJG^\dagger &= 0,\\
G &= -\Theta H^\dagger J,\\
K &= I.
\end{split}
\end{equation}
\end{theorem}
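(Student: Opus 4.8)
The statement is an equivalence, so the plan is to prove the two implications separately: necessity is a direct substitution of the representation in Definition~1, and sufficiency is an explicit reconstruction of the Hamiltonian and coupling matrices. For necessity, suppose the system satisfies Definition~1, so that $F = -\iota\Theta M - \tfrac12\Theta N^{\dagger}JN$, $G = -\Theta N^{\dagger}J$, $H = N$, $K = I$ with $M = M^{\dagger}$ and $\Theta = \Theta^{\dagger}$. Then $K = I$ and $G = -\Theta H^{\dagger}J$ are immediate, and $\Theta$ already has the form in (\ref{eq:theta}). For the remaining Lyapunov-type identity I would compute $F\Theta + \Theta F^{\dagger}$: using $M = M^{\dagger}$, $\Theta = \Theta^{\dagger}$ and $J = J^{\dagger}$, the two Hamiltonian contributions $-\iota\Theta M\Theta$ and $+\iota\Theta M\Theta$ cancel, leaving $-\Theta N^{\dagger}JN\Theta$; since $J^{2} = I$ one also has $GJG^{\dagger} = \Theta N^{\dagger}JN\Theta$, so that $F\Theta + \Theta F^{\dagger} + GJG^{\dagger} = 0$.

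For sufficiency, suppose $\Theta = \Theta^{\dagger} = TJT^{\dagger}$ --- so $\Theta$ is invertible, $T$ being non-singular and $J$ invertible --- satisfies the three displayed equations. I would \emph{define} $N := H$, $K := I$, and $M := \iota\Theta^{-1}F + \tfrac{\iota}{2}N^{\dagger}JN$. With these choices the identities $H = N$, $K = I$, $G = -\Theta H^{\dagger}J$ and $F = -\iota\Theta M - \tfrac12\Theta N^{\dagger}JN$ all hold by construction, so only the structural requirements on $M$ and $N$ remain. That $N = H$ is of the form $\Delta(N_{1},N_{2})$ is immediate from (\ref{eq:lqs_2}). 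For $M = M^{\dagger}$, I would write $M - M^{\dagger} = \iota\Theta^{-1}\!\left(F\Theta + \Theta F^{\dagger} + \Theta N^{\dagger}JN\Theta\right)\!\Theta^{-1}$ and note, again via $GJG^{\dagger} = \Theta N^{\dagger}JN\Theta$, that the parenthesised expression is exactly the left-hand side of the Lyapunov equation, hence zero.

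The one genuinely nonroutine point --- and the step I expect to cost the most care --- is verifying that the $M$ just defined is itself of the doubled-up form $\Delta(M_{1},M_{2})$. Here I would record the elementary closure lemma that matrices of the form $\Delta(\cdot,\cdot)$ are stable under sums, products, adjoints and inversion, whereas multiplication by $J$ and multiplication by the scalar $\iota$ each interchange this class with the ``conjugate-twin'' class $\left[\begin{array}{cc} P & Q\\ -Q^{\#} & -P^{\#}\end{array}\right]$. Since $F$ and $N = H$ are of $\Delta$-form by (\ref{eq:lqs_2}), whereas $\Theta = TJT^{\dagger}$ --- and therefore $\Theta^{-1}$ --- belongs to the twin class (the sign being inherited from $J$), both $\iota\Theta^{-1}F$ and $\iota N^{\dagger}JN$ return to the $\Delta$-form class, so their sum $M$ lies there as well. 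Tracking these two symmetry classes consistently through the computation is the main obstacle; once it is in place, the rest reduces to the routine algebraic manipulations indicated above.
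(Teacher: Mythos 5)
The paper itself does not prove this theorem: it is stated with a citation to \cite{SP,IRP3} and used as a known result, so there is no in-paper argument to compare against; your proof has to be judged on its own merits, and it is essentially the standard argument from those references. Both directions check out. Necessity is the substitution you describe: with $M=M^\dagger$, $\Theta=\Theta^\dagger$, $J=J^\dagger$ one gets $F\Theta+\Theta F^\dagger=-\Theta N^\dagger JN\Theta$, while $GJG^\dagger=\Theta N^\dagger J J J N\Theta=\Theta N^\dagger JN\Theta$ because $J^2=I$, and $G=-\Theta H^\dagger J$, $K=I$ are immediate. For sufficiency, your choices $N:=H$, $K:=I$, $M:=\iota\Theta^{-1}F+\tfrac{\iota}{2}N^\dagger JN$ (note $\Theta=TJT^\dagger$ is invertible since $T$ is non-singular) reproduce the defining relations by construction, and $M-M^\dagger=\iota\Theta^{-1}\left(F\Theta+\Theta F^\dagger+GJG^\dagger\right)\Theta^{-1}=0$ exactly as you say. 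The structural bookkeeping, which you correctly flag as the only delicate point, is also right: the $\Delta(\cdot,\cdot)$ class is closed under sums, products, adjoints and inverses; left or right multiplication by $J$, and multiplication by $\iota$, each map it onto the class $\left[\begin{array}{cc} P & Q\\ -Q^{\#} & -P^{\#}\end{array}\right]$ and vice versa; hence $\Theta=TJT^\dagger$ and $\Theta^{-1}$ lie in the twin class, $\Theta^{-1}F$ and $N^\dagger JN$ lie in the twin class, and multiplying by $\iota$ returns both summands of $M$ to $\Delta$-form. This is precisely how the realizability proofs in the cited literature proceed (necessity by substitution, sufficiency by explicitly reconstructing the Hamiltonian matrix from $F$ and $\Theta$ and the coupling matrix from $H$), so your route is not a departure from the source proofs, only from the paper's omission of one.
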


If the system (\ref{eq:lqs_1}) is physically realizable, then the matrices $M$ and $N$ define a complex open harmonic oscillator with coupling operator
\[ \mathbf{L} = \left[\begin{array}{cc}
N_1 & N_2
\end{array}\right] \left[\begin{array}{c}
a\\
a^{\#}
\end{array}\right],\]
and a Hamiltonian operator
\[ \mathbf{H} = \frac{1}{2} \left[\begin{array}{cc}
a^\dagger & a^T
\end{array}\right] M \left[\begin{array}{c}
a\\
a^{\#}
\end{array}\right].\]

\subsection{Annihilation Operator Only Linear Quantum Systems}

Annihilation operator only linear quantum systems are a special case of the above class of linear quantum systems, where the QSDEs (\ref{eq:lqs_1}) can be described purely in terms of the vector of annihilation operators \cite{MP1,MP2}. In this case, we consider Hamiltonian operators of the form $\mathbf{H} = a^\dagger M a$ and coupling vectors of the form $\mathbf{L} = N a$, where $M$ is a Hermitian matrix and $N$ is a complex matrix. The commutation relation (\ref{eq:comm_rel1}), in this case, takes the form:
\begin{equation}\label{eq:comm_rel2}
\left[a,a^\dagger \right] = \Theta,
\end{equation}
where $\Theta > 0$. Also, the corresponding QSDEs are given by:
\begin{equation}\label{eq:ann_lqs_1}
\begin{split}
da &= Fadt + Gd\mathcal{A};\\
d\mathcal{A}^{out} &= Hadt + Kd\mathcal{A}.
\end{split}
\end{equation}

\begin{definition}
(See \cite{MP1,IRP3}) A linear quantum system of the form (\ref{eq:ann_lqs_1}) is physically realizable if there exist complex matrices $\Theta>0$, $M=M^\dagger$, $N$, such that the following is satisfied: 
\begin{equation}
\begin{split}
F &= \Theta \left(-\iota M - \frac{1}{2} N^\dagger N\right),\\
G &= -\Theta N^\dagger ,\\
H &= N,\\
K &= I.
\end{split}
\end{equation}
\end{definition}

\begin{theorem}\label{thm:ann_phys_rlz}
(See \cite{MP1,IRP3}) An annihilation operator only linear quantum system of the form (\ref{eq:ann_lqs_1}) is physically realizable if and only if there exists a complex matrix $\Theta>0$ such that
\begin{equation}
\begin{split}
F\Theta + \Theta F^\dagger + GG^\dagger &= 0,\\
G &= -\Theta H^\dagger ,\\
K &= I.
\end{split}
\end{equation}
\end{theorem}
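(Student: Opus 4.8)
The plan is to prove the two directions of the stated equivalence by direct algebraic manipulation, taking the Definition immediately preceding the theorem as the working characterization of physical realizability for annihilation-operator-only systems.

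\textbf{Necessity.} Suppose the system is physically realizable, so that there exist $\Theta>0$, $M=M^\dagger$, $N$ with $F = \Theta(-\iota M - \tfrac{1}{2}N^\dagger N)$, $G = -\Theta N^\dagger$, $H = N$, $K = I$. The relations $K = I$ and $G = -\Theta H^\dagger$ are immediate from $H = N$. For the remaining identity I would substitute the expression for $F$ and use $M = M^\dagger$ together with $\Theta = \Theta^\dagger$ to obtain $F\Theta = -\iota\Theta M\Theta - \tfrac{1}{2}\Theta N^\dagger N\Theta$ and $\Theta F^\dagger = \iota\Theta M\Theta - \tfrac{1}{2}\Theta N^\dagger N\Theta$; the Hermitian terms cancel upon addition, leaving $F\Theta + \Theta F^\dagger = -\Theta N^\dagger N\Theta$, which is exactly $-GG^\dagger$. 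Hence $F\Theta + \Theta F^\dagger + GG^\dagger = 0$.

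\textbf{Sufficiency.} Conversely, suppose $\Theta>0$ satisfies the three displayed equations. Since $\Theta>0$ it is invertible, and $\Theta^{-1}=(\Theta^{-1})^\dagger$. Define $N := H$ and $M := \iota\bigl(\Theta^{-1}F + \tfrac{1}{2}H^\dagger H\bigr)$. Then $H = N$ and $K = I$ hold by hypothesis, and $G = -\Theta H^\dagger = -\Theta N^\dagger$. Rearranging the definition of $M$ gives $\Theta^{-1}F = -\iota M - \tfrac{1}{2}N^\dagger N$, hence $F = \Theta(-\iota M - \tfrac{1}{2}N^\dagger N)$, the required form. It remains only to check $M = M^\dagger$. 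Taking adjoints shows that $M = M^\dagger$ is equivalent to $\Theta^{-1}F + F^\dagger\Theta^{-1} + H^\dagger H = 0$; multiplying this on both sides by $\Theta$ and using $GG^\dagger = \Theta H^\dagger H\Theta$ (which follows from $G = -\Theta H^\dagger$ and $\Theta = \Theta^\dagger$) recovers precisely $F\Theta + \Theta F^\dagger + GG^\dagger = 0$, which holds by assumption. Therefore $M = M^\dagger$, and the constructed $\Theta, M, N$ witness physical realizability.

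The argument is essentially bookkeeping with adjoints; I do not expect a deep obstacle. The only points needing care are (i) invoking $\Theta>0$ to guarantee that $\Theta^{-1}$ exists and is Hermitian, and (ii) keeping the non-commuting matrix products in the correct order when passing between the Lyapunov-type identity for $F$ and the Hermiticity condition on $M$. One could alternatively deduce the result from Theorem \ref{thm:phys_rlz} by embedding the annihilation-only system into the general doubled-up framework with a block-diagonal commutation matrix, but the direct computation above is shorter and self-contained.
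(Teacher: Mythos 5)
Your proof is correct: the necessity direction is a straightforward cancellation using $M=M^\dagger$ and $\Theta=\Theta^\dagger$, and the sufficiency direction with the explicit construction $N:=H$, $M:=\iota\bigl(\Theta^{-1}F+\tfrac{1}{2}H^\dagger H\bigr)$ verifies all the defining conditions, including Hermiticity of $M$ via the Lyapunov identity. The paper itself gives no proof of this theorem (it simply cites \cite{MP1,IRP3}), and your direct algebraic argument is essentially the standard one found in those references, so there is nothing to reconcile.
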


\subsection{Linear Quantum System from Quantum Optics}

An example of a linear quantum system from quantum optics is a dynamic squeezer, that is an optical cavity with a non-linear optical element inside as shown in Fig. \ref{fig:sqz_scm}. Upon suitable linearizations and approximations, such an optical squeezer can be described by the following quantum stochastic differential equations \cite{IRP2}:
\begin{equation}
\begin{split}
da &= -\frac{\gamma}{2}adt - \chi a^{*} dt - \sqrt{\kappa_1}d\mathcal{A}_1 - \sqrt{\kappa_2}d\mathcal{A}_2;\\
d\mathcal{A}_1^{out} &= \sqrt{\kappa_1}adt + d\mathcal{A}_1;\\
d\mathcal{A}_2^{out} &= \sqrt{\kappa_2}adt + d\mathcal{A}_2,
\end{split}
\end{equation}
where $\kappa_1,\kappa_2>0$, $\chi \in \mathbb{C}$, and $a$ is a single annihilation operator of the cavity mode \cite{BR,GZ}. This leads to a linear quantum system of the form (\ref{eq:lqs_1}) as follows:
\begin{equation}
\begin{split}
\left[\begin{array}{c}
da(t)\\
da(t)^{*}
\end{array}\right] &= \left[\begin{array}{cc}
-\frac{\gamma}{2} & -\chi\\
-\chi^{*} & -\frac{\gamma}{2}
\end{array}\right] \left[\begin{array}{c}
a(t)\\
a(t)^{*}
\end{array}\right] dt\\
&\quad - \sqrt{\kappa_1} \left[\begin{array}{c}
d\mathcal{A}_1(t)\\
d\mathcal{A}_1(t)^{\#}
\end{array}\right]\\
&\quad - \sqrt{\kappa_2} \left[\begin{array}{c}
d\mathcal{A}_2(t)\\
d\mathcal{A}_2(t)^{\#}
\end{array}\right];\\
\left[\begin{array}{c}
d\mathcal{A}_1^{out}(t)\\
d\mathcal{A}_1^{out}(t)^{\#}
\end{array}\right] &= \sqrt{\kappa_1} \left[\begin{array}{c}
a(t)\\
a(t)^{*}
\end{array}\right] dt + \left[\begin{array}{c}
d\mathcal{A}_1(t)\\
d\mathcal{A}_1(t)^{\#}
\end{array}\right];\\
\left[\begin{array}{c}
d\mathcal{A}_2^{out}(t)\\
d\mathcal{A}_2^{out}(t)^{\#}
\end{array}\right] &= \sqrt{\kappa_2} \left[\begin{array}{c}
a(t)\\
a(t)^{*}
\end{array}\right] dt + \left[\begin{array}{c}
d\mathcal{A}_2(t)\\
d\mathcal{A}_2(t)^{\#}
\end{array}\right].
\end{split}
\end{equation}

The above quantum system requires $\gamma = \kappa_1 + \kappa_2$ in order for the system to be physically realizable.

\begin{figure}[!b]
\centering
\includegraphics[width=0.3\textwidth]{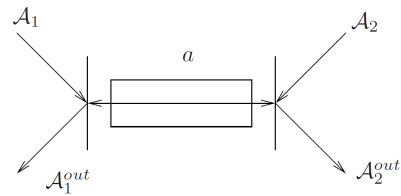}
\caption{Schematic diagram of a dynamic optical squeezer.}
\label{fig:sqz_scm}
\end{figure}

Also, the above quantum optical system can be described purely in terms of the annihilation operator only if $\chi = 0$, i.e. there is no squeezing, in which case it reduces to a passive optical cavity. This leads to a linear quantum system of the form (\ref{eq:ann_lqs_1}) as follows:
\begin{equation}
\begin{split}
da &= -\frac{\gamma}{2}adt - \sqrt{\kappa_1} d\mathcal{A}_1 - \sqrt{\kappa_2} d\mathcal{A}_2;\\
d\mathcal{A}_1^{out} &= \sqrt{\kappa_1} adt + d\mathcal{A}_1;\\
d\mathcal{A}_2^{out} &= \sqrt{\kappa_2} adt + d\mathcal{A}_2,
\end{split}
\end{equation}
where again the system is physically realizable when we have $\gamma = \kappa_1 + \kappa_2$.

\section{Purely-Classical Estimation}
The schematic diagram of a purely-classical estimation scheme is provided in Fig. \ref{fig:cls_scm}. We consider a quantum plant, which is a quantum system of the form (\ref{eq:lqs_1}), (\ref{eq:lqs_2}), defined as follows:
\begin{equation}\label{eq:plant}
\begin{split}
\left[\begin{array}{c}
da(t)\\
da(t)^{\#}
\end{array}\right] &= F \left[\begin{array}{c}
a(t)\\
a(t)^{\#}
\end{array}\right] dt + G \left[\begin{array}{c}
d\mathcal{A}(t)\\
d\mathcal{A}(t)^{\#} 
\end{array}\right];\\
\left[\begin{array}{c}
d\mathcal{Y}(t)\\
d\mathcal{Y}(t)^{\#}
\end{array}\right] &= H \left[\begin{array}{c}
a(t)\\
a(t)^{\#}
\end{array}\right] dt + K \left[\begin{array}{c}
d\mathcal{A}(t)\\
d\mathcal{A}(t)^{\#}
\end{array}\right];\\
z &= C\left[\begin{array}{c}
a(t)\\
a(t)^{\#} 
\end{array}\right].
\end{split}
\end{equation}

\begin{figure}[!b]
\centering
\includegraphics[width=0.5\textwidth]{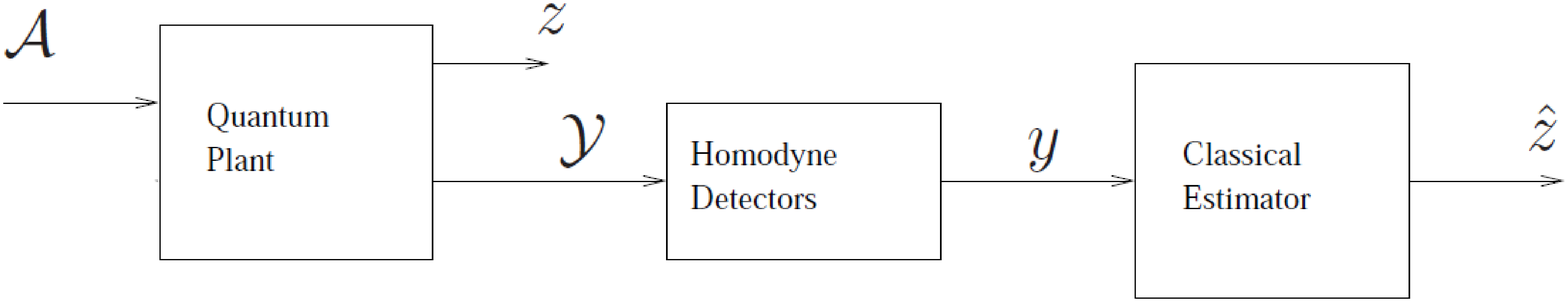}
\caption{Schematic diagram of purely-classical estimation.}
\label{fig:cls_scm}
\end{figure}

Here, $z$ denotes a scalar operator on the underlying Hilbert space and represents the quantity to be estimated. Also, $\mathcal{Y}(t)$ represents the vector of output fields of the plant, and $\mathcal{A}(t)$ represents a vector of quantum noises acting on the plant.

In the case of a purely-classical estimator, a quadrature of each component of the vector $\mathcal{Y}(t)$ is measured using homodyne detection to produce a corresponding classical signal $y_i$:
\begin{equation}\label{eq:class_hd}
\begin{split}
dy_1 &= \cos(\theta_1)d\mathcal{Y}_1 + \sin(\theta_1)d\mathcal{Y}_1^{*};\\
&\vdots\\
dy_m &= \cos(\theta_m)d\mathcal{Y}_m + \sin(\theta_m)d\mathcal{Y}_m^{*}.
\end{split}
\end{equation}

Here, the angles $\theta_1,\hdots,\theta_m$ determine the quadrature measured by each homodyne detector. The vector of classical signals $y = [y_1 \hdots y_m]^T$ is then used as the input to a classical estimator defined as follows:
\begin{equation}\label{eq:class_estimator}
\begin{split}
dx_e &= F_ex_edt + G_edy;\\
\hat{z} &= H_ex_e.
\end{split}
\end{equation}

Here $\hat{z}$ is a scalar classical estimate of the quantity $z$. The estimation error corresponding to this estimate is
\begin{equation}\label{eq:est_err}
e = z - \hat{z}.
\end{equation}

Then, the optimal classical estimator is defined as the system (\ref{eq:class_estimator}) that minimizes the quantity
\begin{equation}\label{eq:est_cost}
\bar{J}_c = \lim_{t \to \infty} \left\langle e^{*}(t)e(t) \right\rangle ,
\end{equation}
which is the mean-square error of the estimate. Here, $\langle \cdot \rangle$ denotes the quantum expectation over the joint quantum-classical system defined by (\ref{eq:plant}), (\ref{eq:class_hd}), (\ref{eq:class_estimator}).

The optimal classical estimator is given by the standard (complex) Kalman filter defined for the system (\ref{eq:plant}), (\ref{eq:class_hd}). This optimal classical estimator is obtained from the solution to the algebraic Riccati equation:
\begin{equation}\label{eq:riccati}
\begin{split}
F_a\bar{P}_e &+ \bar{P}_eF_a^\dagger + G_aG_a^\dagger - (G_aK_a^\dagger + \bar{P}_eH_a^\dagger)L^\dagger \\
&\times(LK_aK_a^\dagger L^\dagger)^{-1}L(G_aK_a^\dagger + \bar{P}_eH_a^\dagger)^\dagger = 0,
\end{split}
\end{equation}
where
\begin{equation}
\begin{split}
F_a &= F, \qquad G_a = G,\\
H_a &= H, \qquad K_a = K,\\
L &= \left[\begin{array}{cc}
L_1 & L_2
\end{array}\right],\\
L_1 &= \left[\begin{array}{cccc}
\cos(\theta_1) & 0 & \hdots & 0\\
0 & \cos(\theta_2) & \hdots & 0\\
 & & \ddots & \\
 & & & \cos(\theta_m)
\end{array}\right],\\
L_2 &= \left[\begin{array}{cccc}
\sin(\theta_1) & 0 & \hdots & 0\\
0 & \sin(\theta_2) & \hdots & 0\\
 & & \ddots & \\
 & & & \sin(\theta_m)
\end{array}\right].
\end{split}
\end{equation}

Here we have assumed that the quantum noise $\mathcal{A}$ is purely canonical, i.e. $d\mathcal{A}d\mathcal{A}^\dagger = Idt$ and hence $K=I$.

Eq. (\ref{eq:riccati}) thus becomes:

\begin{equation}\label{eq:class_riccati}
\begin{split}
F\bar{P}_e &+ \bar{P}_eF^\dagger + GG^\dagger - (G + \bar{P}_eH^\dagger)L^\dagger L(G + \bar{P}_eH^\dagger)^\dagger =0,
\end{split}
\end{equation}

The value of the cost (\ref{eq:est_cost}) is given by
\begin{equation}\label{eq:class_cost}
\bar{J}_c = C\bar{P}_eC^\dagger .
\end{equation}

\section{Coherent-Classical Estimation}
The schematic diagram of the coherent-classical estimation scheme under consideration is provided in Fig. \ref{fig:coh_cls_scm}. In this case, the plant output $\mathcal{Y}(t)$ does not directly drive a bank of homodyne detectors as in (\ref{eq:class_hd}). Rather, this output is fed into another quantum system called a coherent controller, which is defined as follows:
\begin{equation}\label{eq:coh_controller}
\begin{split}
\left[\begin{array}{c}
da_c(t)\\
da_c(t)^{\#}
\end{array}\right] &= F_c \left[\begin{array}{c}
a_c(t)\\
a_c(t)^{\#}
\end{array}\right] dt + G_c \left[\begin{array}{c}
d\mathcal{Y}(t)\\
d\mathcal{Y}(t)^{\#} 
\end{array}\right];\\
\left[\begin{array}{c}
d\tilde{\mathcal{Y}}(t)\\
d\tilde{\mathcal{Y}}(t)^{\#}
\end{array}\right] &= H_c \left[\begin{array}{c}
a_c(t)\\
a_c(t)^{\#}
\end{array}\right] dt + K_c \left[\begin{array}{c}
d\mathcal{Y}(t)\\
d\mathcal{Y}(t)^{\#} 
\end{array}\right].
\end{split}
\end{equation}

\begin{figure}[!b]
\centering
\includegraphics[width=0.5\textwidth]{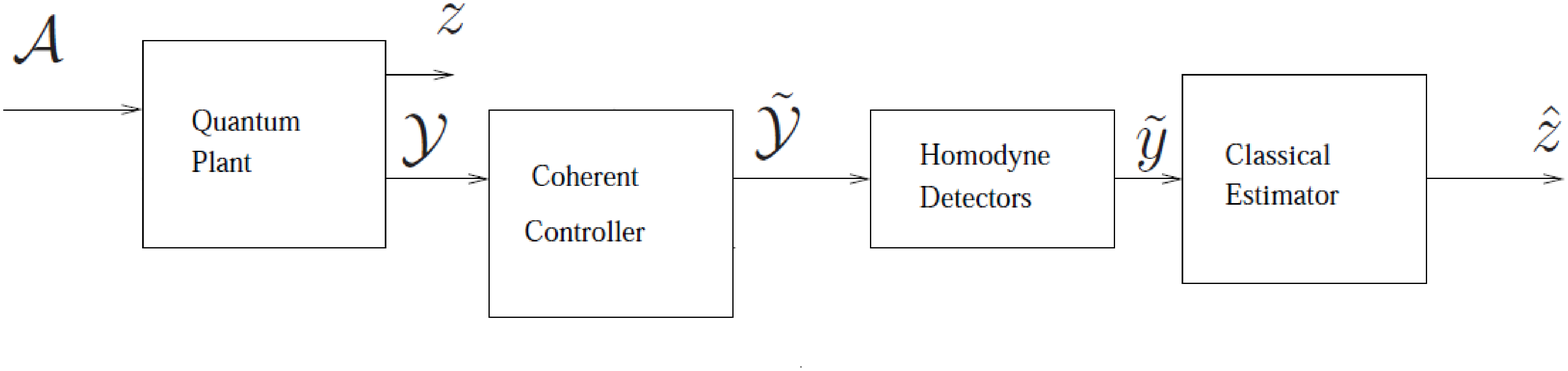}
\caption{Schematic diagram of coherent-classical estimation.}
\label{fig:coh_cls_scm}
\end{figure}

A quadrature of each component of the vector $\tilde{\mathcal{Y}}(t)$ is measured using homodyne detection to produce a corresponding classical signal $\tilde{y}_i$:
\begin{equation}\label{eq:coh_class_hd}
\begin{split}
d\tilde{y}_1 &= \cos(\tilde{\theta}_1)d\tilde{\mathcal{Y}}_1 + \sin(\tilde{\theta}_1)d\tilde{\mathcal{Y}}_1^{*};\\
&\vdots\\
d\tilde{y}_{\tilde{m}} &= \cos(\tilde{\theta}_{\tilde{m}})d\tilde{\mathcal{Y}}_{\tilde{m}} + \sin(\tilde{\theta}_{\tilde{m}})d\tilde{\mathcal{Y}}_{\tilde{m}}^{*}.
\end{split}
\end{equation}

Here, the angles $\tilde{\theta}_1,\hdots,\tilde{\theta}_{\tilde{m}}$ determine the quadrature measured by each homodyne detector. The vector of classical signals $\tilde{y} = [\tilde{y}_1 \hdots \tilde{y}_{\tilde{m}}]^T$ is then used as the input to a classical estimator defined as follows:
\begin{equation}\label{eq:coh_class_estimator}
\begin{split}
d\tilde{x}_e &= \tilde{F}_e\tilde{x}_edt + \tilde{G}_ed\tilde{y};\\
\hat{z} &= \tilde{H}_e\tilde{x}_e.
\end{split}
\end{equation}

Here $\hat{z}$ is a scalar classical estimate of the quantity $z$. Corresponding to this estimate is the estimation error (\ref{eq:est_err}). Then, the optimal coherent-classical estimator is defined as the systems (\ref{eq:coh_controller}), (\ref{eq:coh_class_estimator}) which together minimize the quantity (\ref{eq:est_cost}).

We can now combine the quantum plant (\ref{eq:plant}) and the coherent controller (\ref{eq:coh_controller}) to yield an augmented quantum linear system defined by the following QSDEs:

\small
\begin{equation}\label{eq:coh_system}
\begin{split}
\left[\begin{array}{c}
da\\
da^{\#}\\
da_c\\
da_c^{\#}
\end{array}\right] &= \left[\begin{array}{cc}
F & 0\\
G_cH & F_c
\end{array}\right] \left[\begin{array}{c}
a\\
a^{\#}\\
a_c\\
a_c^{\#}
\end{array}\right] dt + \left[\begin{array}{c}
G\\
G_cK
\end{array}\right] \left[\begin{array}{c}
d\mathcal{A}\\
d\mathcal{A}^{\#}
\end{array}\right];\\
\left[\begin{array}{c}
d\tilde{\mathcal{Y}}\\
d\tilde{\mathcal{Y}}^{\#}
\end{array}\right] &= \left[\begin{array}{cc}
K_cH & H_c
\end{array}\right] \left[\begin{array}{c}
a\\
a^{\#}\\
a_c\\
a_c^{\#}
\end{array}\right] dt + K_cK \left[\begin{array}{c}
d\mathcal{A}\\
d\mathcal{A}^{\#}
\end{array}\right].
\end{split}
\end{equation}
\normalsize

The optimal classical estimator is given by the standard (complex) Kalman filter defined for the system (\ref{eq:coh_system}), (\ref{eq:coh_class_hd}). This optimal classical estimator is obtained from the solution $\tilde{P}_e$ to an algebraic Riccati equation of the form (\ref{eq:riccati}), where
\begin{equation}\label{eq:coh_sys_matrices}
\begin{split}
F_a &= \left[\begin{array}{cc}
F & 0\\
G_cH & F_c
\end{array}\right], \qquad G_a = \left[\begin{array}{c}
G\\
G_cK
\end{array}\right],\\
H_a &= \left[\begin{array}{cc}
K_cH & H_c
\end{array}\right], \qquad K_a = K_cK,\\
L &= \left[\begin{array}{cc}
\tilde{L}_1 & \tilde{L}_2
\end{array}\right],\\
\tilde{L}_1 &= \left[\begin{array}{cccc}
\cos(\tilde{\theta}_1) & 0 & \hdots & 0\\
0 & \cos(\tilde{\theta}_2) & \hdots & 0\\
 & & \ddots & \\
 & & & \cos(\tilde{\theta}_{\tilde{m}})
\end{array}\right],\\
\tilde{L}_2 &= \left[\begin{array}{cccc}
\sin(\tilde{\theta}_1) & 0 & \hdots & 0\\
0 & \sin(\tilde{\theta}_2) & \hdots & 0\\
 & & \ddots & \\
 & & & \sin(\tilde{\theta}_{\tilde{m}})
\end{array}\right],
\end{split}
\end{equation}
where since the quantum noise $\mathcal{A}$ is assumed to be purely canonical, i.e. $d\mathcal{A}d\mathcal{A}^\dagger = Idt$, we have $K_a=K_cK=I$, which requires $K_c=I$ too, as $K=I$.

Then, the corresponding optimal classical estimator (\ref{eq:coh_class_estimator}) is defined by the equations:
\begin{equation}\label{eq:coh_sys_est}
\begin{split}
\tilde{F}_e &= F_a - \tilde{G}_eLH_a;\\
\tilde{G}_e &= (G_aK_a^\dagger + \tilde{P}_eH_a^\dagger)L^\dagger(LK_aK_a^\dagger L^\dagger)^{-1};\\
\tilde{H}_e &= \left[\begin{array}{cc}
C & 0
\end{array}\right].
\end{split}
\end{equation}

We write:
\begin{equation}
\tilde{P}_e = \left[\begin{array}{cc}
P_1 & P_2\\
P_2^\dagger & P_3
\end{array}\right],
\end{equation}
where $P_1$ is of the same dimension as $\bar{P}_e$.

Then, the value of the corresponding cost of the form (\ref{eq:est_cost}) is then given by
\begin{equation}\label{eq:coh_cost}
\tilde{J}_c = \left[\begin{array}{cc}
C & 0
\end{array}\right] \tilde{P}_e \left[\begin{array}{c}
C^\dagger \\
0
\end{array}\right] = CP_1C^\dagger .
\end{equation}

Also, we calculate:

\small
\begin{equation}
\begin{split}
G_aG_a^\dagger &= \left[\begin{array}{cc}
GG^\dagger & GG_c^\dagger\\
G_cG^\dagger & G_cG_c^\dagger
\end{array}\right],\\
G_aK_a^\dagger + \tilde{P}_eH_a^\dagger &= \left[\begin{array}{c}
G + P_1H^\dagger + P_2H_c^\dagger\\
G_c + P_2^\dagger H^\dagger + P_3H_c^\dagger
\end{array}\right],\\
F_a\tilde{P}_e &= \left[\begin{array}{cc}
FP_1 & FP_2\\
G_cHP_1+F_cP_2^\dagger & G_cHP_2+F_cP_3
\end{array}\right],\\
\tilde{P}_eF_a^\dagger &= \left[\begin{array}{cc}
P_1F^\dagger & P_1H^\dagger G_c^\dagger + P_2F_c^\dagger\\
P_2^\dagger F^\dagger & P_2^\dagger H^\dagger G_c^\dagger + P_3F_c^\dagger
\end{array}\right].
\end{split}
\end{equation}
\normalsize

Thus, upon expanding the Riccati equation (\ref{eq:riccati}), we get the following set of equations:

\small
\begin{equation}\label{eq:riccati_expand}
\begin{split}
&FP_1+P_1F^\dagger +GG^\dagger -(G+P_1H^\dagger +P_2H_c^\dagger)L^\dagger\\
&\times L(G + P_1H^\dagger + P_2H_c^\dagger)^\dagger =0,\\
&FP_2+P_1H^\dagger G_c^\dagger +P_2F_c^\dagger +GG_c^\dagger -(G+P_1H^\dagger +P_2H_c^\dagger)L^\dagger \\
&\times L(G_c +P_2^\dagger H^\dagger +P_3H_c^\dagger)^\dagger = 0,\\
&G_cHP_2 + P_2^\dagger H^\dagger G_c^\dagger + F_cP_3 + P_3F_c^\dagger + G_cG_c^\dagger\\
&-(G_c +P_2^\dagger H^\dagger +P_3H_c^\dagger) L^\dagger L(G_c +P_2^\dagger H^\dagger +P_3H_c^\dagger)^\dagger = 0.
\end{split}
\end{equation}
\normalsize

\section{Main Result}
Our main result is the following theorem.

\begin{theorem}\label{thm:central_result}
Consider a coherent-classical estimation scheme defined by (\ref{eq:ann_lqs_1}) ($\mathcal{A}^{out}$ being $\mathcal{Y}$), (\ref{eq:coh_controller}), (\ref{eq:coh_class_hd}) and (\ref{eq:coh_class_estimator}), such that the plant is physically realizable, with the estimation error cost $\tilde{J}_c$ defined in (\ref{eq:coh_cost}). Also, consider the corresponding purely-classical estimation scheme defined by (\ref{eq:ann_lqs_1}), (\ref{eq:class_hd}) and (\ref{eq:class_estimator}), such that the plant is physically realizable, with the estimation error cost $\bar{J}_c$ defined in (\ref{eq:class_cost}). Then,
\begin{equation}
\tilde{J}_c = \bar{J}_c.
\end{equation}
\end{theorem}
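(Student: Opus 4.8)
The plan is to prove the two inequalities $\tilde J_c \le \bar J_c$ and $\tilde J_c \ge \bar J_c$ separately, where $\bar J_c$ denotes the optimal purely-classical cost (minimized over the quadrature angles $\theta_i$) and $\tilde J_c$ the optimal coherent-classical cost (minimized over the coherent controller and the angles $\tilde\theta_i$). The inequality $\tilde J_c \le \bar J_c$ is immediate: the purely-classical scheme is the special case of the coherent-classical scheme obtained by a trivial coherent controller, e.g.\ $H_c = 0$, $G_c = 0$, $K_c = I$ and $F_c$ any Hurwitz matrix of admissible form, so that $\tilde{\mathcal Y} = \mathcal Y$ and the controller state decouples; with the same homodyne angles the augmented Riccati equation decouples and reproduces exactly the classical filter and its cost. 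Hence all the content is in the reverse inequality: for a physically realizable \emph{passive} plant, no coherent pre-processing can help.

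For $\tilde J_c \ge \bar J_c$, fix an arbitrary physically realizable coherent controller (\ref{eq:coh_controller}) and arbitrary homodyne angles, and let $\tilde P_e = \left[\begin{array}{cc} P_1 & P_2 \\ P_2^\dagger & P_3\end{array}\right] \succeq 0$ be the solution of the augmented Riccati equation, so that $\tilde J_c = C P_1 C^\dagger$ by (\ref{eq:coh_cost}). The first step is to feed the annihilation-operator-only physical-realizability identities of Theorem~\ref{thm:ann_phys_rlz} for the plant --- $\Theta > 0$, $G = -\Theta H^\dagger$, and $F\Theta + \Theta F^\dagger + GG^\dagger = 0$ --- into the block-expanded equations (\ref{eq:riccati_expand}). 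With $\Pi_1 := P_1 - \Theta$ the $(1,1)$-block collapses to $F\Pi_1 + \Pi_1 F^\dagger = (\Pi_1 H^\dagger + P_2 H_c^\dagger)L^\dagger L(\Pi_1 H^\dagger + P_2 H_c^\dagger)^\dagger$, and since $F$ is Hurwitz with positive semidefinite right-hand side, $\Pi_1 \preceq 0$. The purely-classical equation (\ref{eq:class_riccati}), under the same substitution, becomes the genuine quadratic Riccati equation $F\bar\Pi + \bar\Pi F^\dagger = \bar\Pi H^\dagger L_0^\dagger L_0 H\bar\Pi$ for $\bar\Pi := \bar P_e - \Theta \preceq 0$ (with $L_0$ the classical homodyne matrix), and the target $C P_1 C^\dagger \ge C\bar P_e C^\dagger$ is equivalent to $C\Pi_1 C^\dagger \ge C\bar\Pi^{\mathrm{opt}}C^\dagger$, i.e.\ to showing that the coherent ``excess variance reduction'' $-\Pi_1$ never exceeds the best classical one $-\bar\Pi^{\mathrm{opt}} = \Theta - \bar P_e^{\mathrm{opt}}$ in the $C\,\cdot\,C^\dagger$ sense.

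The crux --- and the step I expect to be the main obstacle --- is controlling the cross term $P_2 H_c^\dagger$ in the $(1,1)$-block. Writing $\Pi_1 = -\Sigma_1$, that block reads $F\Sigma_1 + \Sigma_1 F^\dagger + W L^\dagger L W^\dagger = 0$ with $W := \Sigma_1 H^\dagger - P_2 H_c^\dagger$, so $\Sigma_1 = \int_0^\infty e^{Ft} W L^\dagger L W^\dagger e^{F^\dagger t}\,dt$; bounding this using $L^\dagger L \preceq I$ and then estimating $W W^\dagger$ requires bringing in the $(1,2)$- and $(2,2)$-blocks of (\ref{eq:riccati_expand}) together with the \emph{controller's} physical-realizability relations ($G_c$ expressed through $H_c$ and a controller commutation matrix). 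The cleanest organization I see is first to argue that an optimal coherent controller may be taken passive, so that the augmented system is again annihilation-operator-only and physically realizable --- hence its input-output transfer function is inner and $\tilde{\mathcal Y}$ is a lossless rotation of $\mathcal Y$ --- and then to show the resulting $(1,1)$-block exhibits $\Sigma_1$ as dominated by the solution $\Sigma^{\mathrm{opt}} = \Theta - \bar P_e^{\mathrm{opt}}$ of a classical filtering Riccati equation with an admissible homodyne matrix, via the standard monotonicity/comparison theorem for filtering Riccati equations. Establishing the reduction to passive controllers rigorously --- equivalently, that squeezing or other active elements in the controller cannot manufacture information that is not already present at the standard-quantum-limit output of a passive plant --- is the part that genuinely uses passivity and is where the real work lies; without it the inequality is false, as the final example of the paper shows.
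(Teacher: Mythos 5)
There is a genuine gap: your proposal is a plan rather than a proof, and the step you yourself identify as "where the real work lies" (ruling out any benefit from coherent pre-processing, via a reduction to passive controllers, inner transfer functions, and a Riccati comparison argument) is never carried out. Moreover, the partial result you do obtain points the wrong way for your own strategy. From the $(1,1)$-block of (\ref{eq:riccati_expand}) with $\Pi_1 = P_1 - \Theta$ you correctly get $F\Pi_1 + \Pi_1 F^\dagger \succeq 0$, hence $P_1 \preceq \Theta$ when $F$ is Hurwitz; but your target inequality $\tilde{J}_c \ge \bar{J}_c$ requires a \emph{lower} bound on $C P_1 C^\dagger$ relative to the classical benchmark, and you never establish what that benchmark is. In addition, your framing (costs optimized over controllers and homodyne angles, proved by two inequalities) is weaker than the theorem as stated, which asserts $\tilde{J}_c = \bar{J}_c$ for \emph{every} fixed coherent controller (\ref{eq:coh_controller}) and every choice of angles, as the example in Fig.~\ref{fig:cls_vs_coh_cls_1} illustrates angle by angle.

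What you are missing is that physical realizability of the annihilation-operator-only plant (Theorem \ref{thm:ann_phys_rlz}) gives not only $F\Theta + \Theta F^\dagger + GG^\dagger = 0$ but also $G = -\Theta H^\dagger$, so that $G + \Theta H^\dagger = 0$: the innovation gain acting on the plant block vanishes identically. Consequently $\bar{P}_e = \Theta$ solves (\ref{eq:class_riccati}) exactly (zero Kalman gain, filter matrix $F$ Hurwitz, hence the stabilizing solution), and $\tilde{P}_e = \left[\begin{array}{cc}\Theta & 0\\ 0 & P_3\end{array}\right]$ solves the augmented Riccati equation (\ref{eq:riccati}): with $P_1 = \Theta$ and $P_2 = 0$ the first two equations of (\ref{eq:riccati_expand}) hold because $G + \Theta H^\dagger = 0$, and the third is just the classical filtering Riccati equation for the controller alone. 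The cross term $P_2 H_c^\dagger$ that you regard as the main obstacle is therefore zero at the actual (stabilizing) solution, no reduction to passive controllers or comparison theorem is needed, and both costs equal $C\Theta C^\dagger$ for every controller and every angle, giving equality directly rather than via two inequalities. (Your closing remark also conflates plant and controller passivity: the paper's final example shows the result fails for a non-passive \emph{plant}; it says nothing about needing to restrict to passive \emph{controllers}.)
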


\begin{proof}
We first consider the form of the system (\ref{eq:plant}) under the assumption that the plant can be characterized purely by annihilation operators. This essentially implies that the plant is a passive quantum system. A quantum system (\ref{eq:lqs_1}), (\ref{eq:lqs_2}) is characterized by annihilation operators only when $F_2, G_2, H_2, K_2 = 0$.

Then, the equations for the annihilation operators in (\ref{eq:plant}) take the form:
\begin{equation}\label{eq:ann_plant}
\begin{split}
da &= F_1adt + G_1d\mathcal{A};\\
d\mathcal{Y} &= H_1adt + K_1d\mathcal{A}.
\end{split}
\end{equation}

The corresponding equations for the creation operators are then:
\begin{equation}\label{eq:crn_plant}
\begin{split}
da^{*} &= F_1^{\#}a^{*}dt + G_1^{\#}d\mathcal{A}^{\#};\\
d\mathcal{Y}^{\#} &= H_1^{\#}a^{*}dt + K_1^{\#}d\mathcal{A}^{\#},
\end{split}
\end{equation}
Hence, the plant is described by (\ref{eq:plant}) where:
\begin{equation}
\begin{split}
F &= \left[\begin{array}{cc}
F_1 & 0\\
0 & F_1^{\#}
\end{array}\right], G = \left[\begin{array}{cc}
G_1 & 0\\
0 & G_1^{\#}
\end{array}\right],\\
H &= \left[\begin{array}{cc}
H_1 & 0\\
0 & H_1^{\#}
\end{array}\right], K = \left[\begin{array}{cc}
K_1 & 0\\
0 & K_1^{\#}
\end{array}\right].
\end{split}
\end{equation}

Next, we use the assumption that the plant is physically realizable. Then, by applying Theorem \ref{thm:ann_phys_rlz} to (\ref{eq:ann_plant}), there exists a matrix $\Theta_1>0$, such that:
\begin{equation}\label{eq:ann_phys_rlz}
\begin{split}
F_1\Theta_1 +\Theta_1 F_1^\dagger +G_1G_1^\dagger &=0,\\
G_1&=-\Theta_1 H_1^\dagger ,\\
K_1&=I.
\end{split}
\end{equation}

Hence,
\begin{equation}\label{eq:crn_phys_rlz}
\begin{split}
F_1^{\#}\Theta_1^{\#} +\Theta_1^{\#} F_1^T +G_1^{\#}G_1^T &=0,\\
G_1^{\#}&=-\Theta_1^{\#} H_1^T ,\\
K_1^{\#}&=I.
\end{split}
\end{equation}

Combining (\ref{eq:ann_phys_rlz}) and (\ref{eq:crn_phys_rlz}), we get:
\begin{equation}\label{eq:plant_phys_rlz}
\begin{split}
F\Theta +\Theta F^\dagger +GG^\dagger &=0,\\
G&=-\Theta H^\dagger ,\\
K&=I,
\end{split}
\end{equation}
where $\Theta = \left[\begin{array}{cc}
\Theta_1 & 0\\
0 & \Theta_1^{\#}
\end{array}\right] > 0$.

Substituting $\bar{P}_e = \Theta$ in the left-hand side of the Riccati equation (\ref{eq:class_riccati}), we get:
\[ 
F\Theta + \Theta F^\dagger + GG^\dagger - (G + \Theta H^\dagger)L^\dagger L(G + \Theta H^\dagger)^\dagger ,
\]
which is clearly zero, owing to (\ref{eq:plant_phys_rlz}). Thus, $\Theta$ satisfies (\ref{eq:class_riccati}).

Also, it follows from the above that $\tilde{P}_e = \left[\begin{array}{cc}
\Theta & 0\\
0 & P_3 \end{array}\right]$ satisfies (\ref{eq:riccati}) for the case of coherent-classical estimation, since (\ref{eq:riccati_expand}) is satisfied, owing to (\ref{eq:plant_phys_rlz}), by substituting $P_1 = \Theta$ and $P_2 = 0$ to yield:

\small
\begin{equation}
\begin{split}
&F\Theta +\Theta F^\dagger +GG^\dagger -(G+\Theta H^\dagger)L^\dagger L(G + \Theta H^\dagger)^\dagger =0,\\
&(G+\Theta H^\dagger)G_c^\dagger -(G+\Theta H^\dagger)L^\dagger L(G_c +P_3H_c^\dagger)^\dagger = 0,\\
&F_cP_3 + P_3F_c^\dagger + G_cG_c^\dagger -(G_c +P_3H_c^\dagger) L^\dagger L(G_c +P_3H_c^\dagger)^\dagger = 0,
\end{split}
\end{equation}
\normalsize
where $P_3>0$ is simply the error-covariance of the purely-classical estimation of the coherent controller alone.

Thus, we get $\bar{J}_c = \tilde{J}_c = C\Theta C^\dagger $.
\end{proof}

\begin{remark}
We note that the Kalman gain of the purely-classical estimator is zero when $\bar{P}_e = \Theta$. This implies that the Kalman state estimate is independent of the measurement. This is consistent with Corollary 1 of Ref. \cite{IRP3}, which states that for a physically realizable annihilation operator quantum system with only quantum noise inputs, any output field contains no information about the internal variables of the system.
\end{remark}

\begin{remark}
Theorem \ref{thm:central_result} implies that a coherent-classical estimation of a physically realizable quantum plant, that can be described purely by annihilation operators, performs exactly identical to, and no better than, a purely-classical estimation of the plant. This is so because the output field of the quantum plant, as observed above, contains no information about the internal variables of the plant and, therefore, serves simply as a quantum white noise input for the coherent controller.
\end{remark}

\vspace*{2mm}

It is interesting to ask at this point that if the plant is not a passive quantum system and, therefore, cannot be described purely by annihilation operators, is it possible to get a better estimate with a coherent-classical estimator than that with a purely-classical estimator? It turns out that it is possible to get a lower estimation error with coherent-classical estimation (even without the coherent feedback used in \cite{IRP2}) than with classical-only estimation. We will see an example in the following section, where this is the case.

\section{Dynamic Squeezer Example}
In this section, we consider examples involving dynamic squeezers. First, we present an example to illustrate Theorem \ref{thm:central_result}.

Let us consider the quantum plant to be described by the QSDEs:

\small
\begin{equation}\label{eq:sqz_plant}
\begin{split}
\left[\begin{array}{c}
da(t)\\
da(t)^{*}
\end{array}\right] &= \left[\begin{array}{cc}
-\frac{\gamma}{2} & -\chi\\
-\chi^{*} & -\frac{\gamma}{2}
\end{array}\right] \left[\begin{array}{c}
a(t)\\
a(t)^{*}
\end{array}\right] dt - \sqrt{\kappa} \left[\begin{array}{c}
d\mathcal{A}(t)\\
d\mathcal{A}(t)^{\#}
\end{array}\right];\\
\left[\begin{array}{c}
d\mathcal{Y}(t)\\
d\mathcal{Y}(t)^{\#}
\end{array}\right] &= \sqrt{\kappa} \left[\begin{array}{c}
a(t)\\
a(t)^{*}
\end{array}\right] dt + \left[\begin{array}{c}
d\mathcal{A}(t)\\
d\mathcal{A}(t)^{\#}
\end{array}\right];\\
z &= \left[\begin{array}{cc}
0.2 & -0.2
\end{array}\right] \left[\begin{array}{c}
a\\
a^{*}
\end{array}\right].
\end{split}
\end{equation}
\normalsize

Here, we choose $\gamma = 4$, $\kappa = 4$ and $\chi = 0$. Note that this system is physically realizable, since $\gamma = \kappa$, and is annihilation operator only, since $\chi = 0$. In fact, this system corresponds to a passive optical cavity. The matrices corresponding to the system (\ref{eq:plant}) are:
\[F = \left[\begin{array}{cc}
-2 & 0\\
0 & -2
\end{array}\right], G = \left[\begin{array}{cc}
-2 & 0\\
0 & -2
\end{array}\right], H = \left[\begin{array}{cc}
2 & 0\\
0 & 2
\end{array}\right],\] 
\[K = I, C = \left[\begin{array}{cc}
0.2 & -0.2
\end{array}\right].\]

We then calculate the optimal classical-only state estimator and the error $\bar{J}_c$ of (\ref{eq:class_cost}) for this system using the standard Kalman filter equations corresponding to homodyne detector angles varying from $\theta = 0^{\circ}$ to $\theta = 180^{\circ}$.

We next consider the case where a dynamic squeezer is used as the coherent controller in a coherent-classical estimation scheme. In this case, the coherent controller (\ref{eq:coh_controller}) is described by the QSDEs:

\small
\begin{equation}\label{eq:sqz_ctrlr}
\begin{split}
\left[\begin{array}{c}
da(t)\\
da(t)^{*}
\end{array}\right] &= \left[\begin{array}{cc}
-\frac{\gamma}{2} & -\chi\\
-\chi^{*} & -\frac{\gamma}{2}
\end{array}\right] \left[\begin{array}{c}
a(t)\\
a(t)^{*}
\end{array}\right] dt - \sqrt{\kappa} \left[\begin{array}{c}
d\mathcal{Y}(t)\\
d\mathcal{Y}(t)^{\#}
\end{array}\right];\\
\left[\begin{array}{c}
d\tilde{\mathcal{Y}}(t)\\
d\tilde{\mathcal{Y}}(t)^{\#}
\end{array}\right] &= \sqrt{\kappa} \left[\begin{array}{c}
a(t)\\
a(t)^{*}
\end{array}\right] dt + \left[\begin{array}{c}
d\mathcal{Y}(t)\\
d\mathcal{Y}(t)^{\#}
\end{array}\right].
\end{split}
\end{equation}
\normalsize

Here, we choose $\gamma = 16$, $\kappa = 16$ and $\chi = 2$, so that the system is physically realizable. The matrices corresponding to the system (\ref{eq:coh_controller}) are:
\[F_c = \left[\begin{array}{cc}
-8 & -2\\
-2 & -8
\end{array}\right], G_c = \left[\begin{array}{cc}
-4 & 0\\
0 & -4
\end{array}\right],\] \[H_c = \left[\begin{array}{cc}
4 & 0\\
0 & 4
\end{array}\right], K_c = I.\]

Then, the classical estimator for this case is calculated according to (\ref{eq:coh_system}), (\ref{eq:coh_sys_matrices}), (\ref{eq:riccati}), (\ref{eq:coh_sys_est}) for the homodyne detector angle varying from $\theta=0^{\circ}$ to $\theta=180^{\circ}$. The resulting value of the cost $\tilde{J}_c$ in (\ref{eq:coh_cost}) alongwith the cost for the purely-classical estimator case is shown in Fig. \ref{fig:cls_vs_coh_cls_1}.

\begin{figure}[!b]
\hspace*{-5mm}
\includegraphics[width=0.56\textwidth]{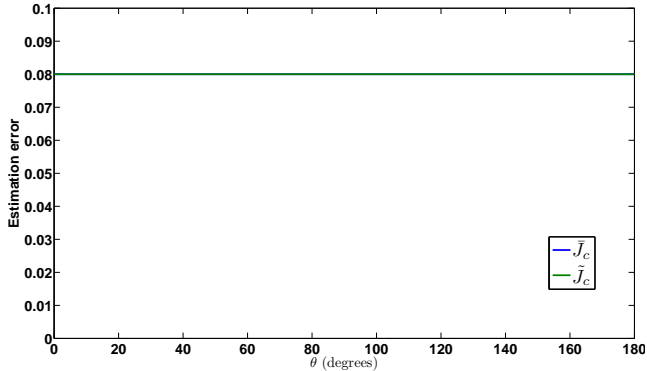}
\caption{Estimation error vs. homodyne detection angle $\theta$ in the case of an annihilation operator only plant.}
\label{fig:cls_vs_coh_cls_1}
\end{figure}

It is clear from the figure that both the classical-only and coherent-classical estimators have the same estimation error cost for all homodyne angles. This illustrates Theorem \ref{thm:central_result} proved in the previous section.

Now, we shall illustrate an example where both the plant and controller are physically realizable, but the plant has $\chi \neq 0$, i.e. it is not annihilation operator only.

In (\ref{eq:sqz_plant}), we choose $\gamma = 4$, $\kappa = 4$ and $\chi = 1$. Note that this system is physically realizable, since $\gamma = \kappa$. The matrices corresponding to the system (\ref{eq:plant}) are:
\[F = \left[\begin{array}{cc}
-2 & -1\\
-1 & -2
\end{array}\right], G = \left[\begin{array}{cc}
-2 & 0\\
0 & -2
\end{array}\right], H = \left[\begin{array}{cc}
2 & 0\\
0 & 2
\end{array}\right],\] 
\[K = I, C = \left[\begin{array}{cc}
0.2 & -0.2
\end{array}\right].\]

We then calculate the optimal classical estimator as above for the homodyne detector angle varying from $\theta=0^{\circ}$ to $\theta=180^{\circ}$.

Next, in (\ref{eq:sqz_ctrlr}), we choose $\gamma = 16$, $\kappa = 16$ and $\chi = 4$. Note that this system is also physically realizable, since $\gamma = \kappa$. The matrices corresponding to the system (\ref{eq:coh_controller}) are:
\[F_c = \left[\begin{array}{cc}
-8 & -4\\
-4 & -8
\end{array}\right], G_c = \left[\begin{array}{cc}
-4 & 0\\
0 & -4
\end{array}\right],\] \[H_c = \left[\begin{array}{cc}
4 & 0\\
0 & 4
\end{array}\right], K_c = I.\]

The classical estimator for this case is calculated as above for the homodyne detector angle varying from $\theta=0^{\circ}$ to $\theta=180^{\circ}$. The resulting value of the cost $\tilde{J}_c$ in (\ref{eq:coh_cost}) alongwith the cost $\bar{J}_c$ in (\ref{eq:class_cost}) for the purely-classical estimator case is shown in Fig. \ref{fig:cls_vs_coh_cls_2}.

\begin{figure}[!b]
\hspace*{-5mm}
\includegraphics[width=0.56\textwidth]{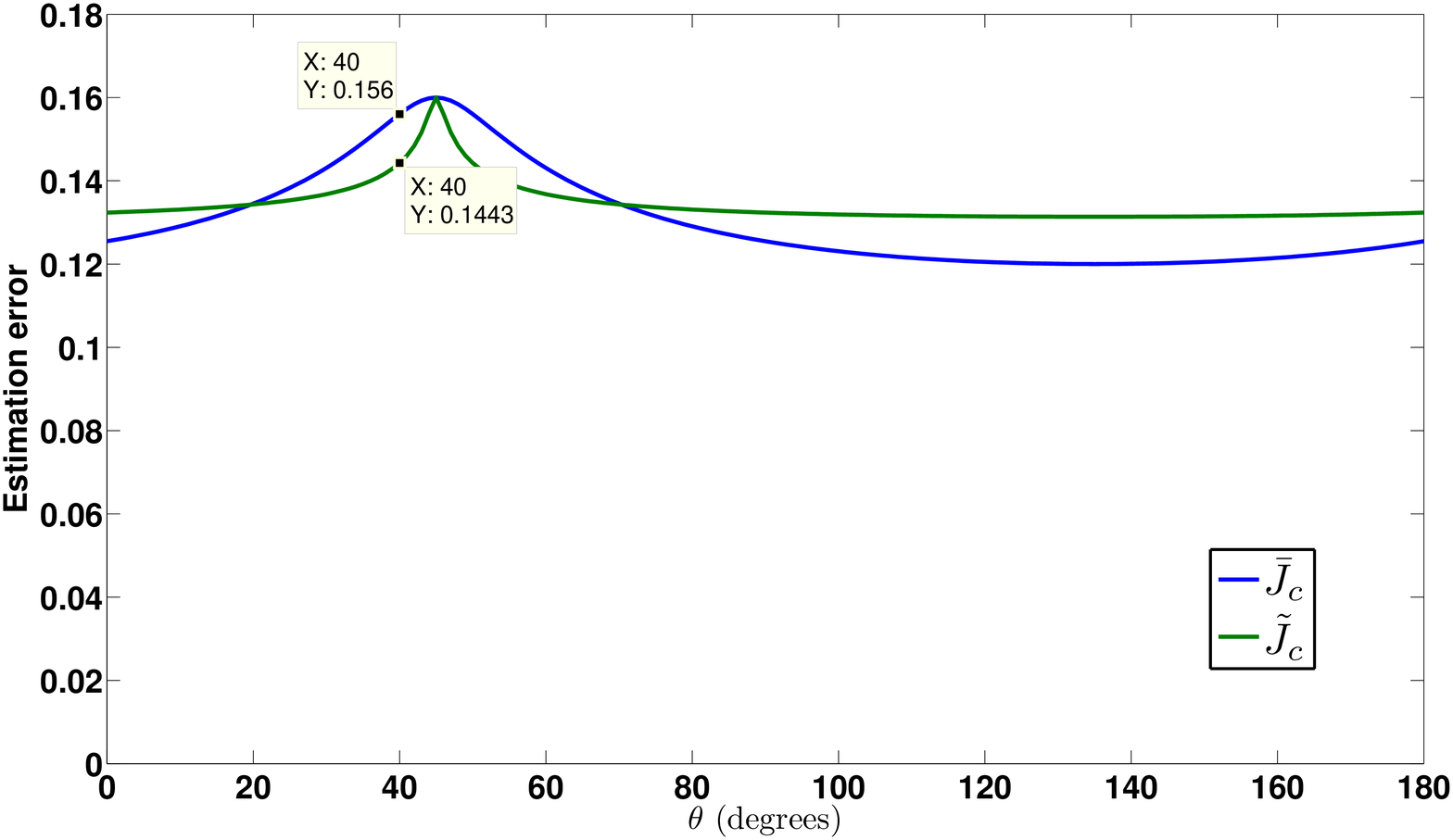}
\caption{Estimation error vs. homodyne detection angle $\theta$ in the case of a squeezer plant.}
\label{fig:cls_vs_coh_cls_2}
\end{figure}

From this figure, we can see that the coherent-classical estimator can perform better than the purely-classical estimator, e.g. for a homodyne detector angle of $\theta=40^{\circ}$. It, however, appears that, for the best choice of homodyne detector angle, the classical-only estimator always outperforms the coherent-classical estimator.

\section{Conclusion}
In this paper, we have demonstrated that a coherent-classical estimation scheme without coherent feedback cannot provide better estimates than a classical-only estimation scheme, when the quantum plant is assumed to be a physically realizable annihilation operator only quantum system. We have also presented an example where such coherent-classical estimation is better than purely-classical estimation, if the plant is not assumed to be an annihilation operator only quantum system.

\bibliographystyle{IEEEtran}
\bibliography{IEEEabrv,cdcbib}

\end{document}